\documentclass[conference,letterpaper]{IEEEtran}
\IEEEoverridecommandlockouts

\usepackage{cite}
\usepackage[tbtags]{amsmath}
\usepackage{amssymb}
\usepackage{amsfonts}
\usepackage{algorithmic}
\usepackage{graphicx}
\usepackage{textcomp}
\usepackage[dvipsnames]{xcolor}
\usepackage{siunitx}
\usepackage[printonlyused]{acronym}
\usepackage{MnSymbol}
\usepackage{tikz}
\usepackage{booktabs}
\usepackage{threeparttable}
\usetikzlibrary{quotes,angles,positioning}
\usetikzlibrary{decorations.pathreplacing}
\usepackage[english]{babel}
\usepackage{amsthm}
\usepackage{graphics}
\usepackage{siunitx}
\usepackage[caption=false,font=footnotesize]{subfig}

\DeclareMathOperator{\EX}{\mathbb{E}}% expected value
\newcommand{\var}{\mathrm{Var}} % variance 

\pgfdeclarelayer{bg} % declare background layer
\pgfsetlayers{bg,main}

\newcommand{\isep}{\mathrel{{.}\,{.}}\nobreak}

\title{Data Freshness in Mixed-Memory Intermittently-Powered Systems}

\author{\IEEEauthorblockN{James Scott Broadhead and Przemys\l{}aw Pawe\l{}czak}
\IEEEauthorblockA{Embedded and Networked Systems, EEMCS, Delft University of Technology, The Netherlands\\
Email: \{J.S.Broadhead, P.Pawelczak\}@tudelft.nl}}

\newtheorem{theorem}{Theorem}
\newtheorem{lemma}[theorem]{Lemma}

\begin{document}

\maketitle

\begin{abstract}
\ac{aoi} is a key metric to understand data freshness in \ac{iot} devices. In this paper we analyse an intermittently-powered \ac{iot} sensor---with mixed-memory (volatile and non-volatile) architecture---that uses a \ac{tdc} scheme. We derive the average \ac{paoi} and average \ac{aoi} of the system, and use these metrics to understand which device parameters most significantly influence performance. We go on to consider how the average \ac{paoi} of a mixed-memory system compares with entirely volatile or entirely non-volatile architecture, and also introduce an alternative \ac{tdc} strategy to improve system resilience in unpredictable environmental conditions. 
\end{abstract}

    \acrodef{iot}[IoT]{Internet of Things}
    \acrodef{nvm}[NVM]{Non-Volatile Memory} 
    \acrodef{aoi}[AoI]{Age of Information}
    \acrodef{hpc}[HPC]{High Performance Computing}
    \acrodef{tac}[TAC]{Time-Aware Frequency-Dependent Checkponting}
    \acrodef{pac}[PAC]{Processing-Aware Frequency-Dependent Checkpointing}
    \acrodef{vm}[VM]{Volatile Memory}
    \acrodef{ipd}[IPD]{Intermittently-Powered Device}
    \acrodef{paoi}[PAoI]{Peak Age of Information}
    \acrodef{tdc}[TDC]{Time-Dependent Checkpointing}
    \acrodef{sfc}[SFC]{Split-Frequency Checkpointing}  

\section{Introduction}
\label{sec:Introduction}

With an increasing paradigm shift towards battery-free energy harvesting-based design in low-powered embedded systems, new methods of operation have been developed to address the inherent intermittency of available harvested energy. Current \emph{intermittent computing techniques}~\cite{lucia:survey:snapl:2017,ganesan:next:hpca:2019,surbatovich:formal_intermittent:pacmpl:2020,kortbeek:legacy_software:asplos:2020,de_winkel:reliable:asplos:2020} seek to minimise the time and energy impact of power failure by strategically checkpointing---effectively saving---the system state from \ac{vm} to \ac{nvm} in mixed-memory systems~\cite{hester:batteryless:sensys:2017} (such as the popular Texas Instruments MSP430 micro-controller~\cite[Section 3.1]{lucia:survey:snapl:2017}~\cite[Section 2.1.1]{broadhead:vlc:liot:2020}). Whilst much work has been done to develop new checkpointing strategies, there is still opportunity to better describe these systems mathematically~\cite[Section 3.2.2]{broadhead:vlc:liot:2020}, in particular from a data freshness perspective. 

A relevant metric to measure data freshness is the Age of Information (\ac{aoi})~\cite{kaul:real-time_status:infocom:2012,yates:survey:arxiv:2020,kam:random_updates:isit:2013,behrouzi:freshness_leader-based_replicated:isit:2020,kaul:priority:isit:2018,baknina:status_updates:isit:2018,hsu:aoi_design:isit:2017}, which will form the basis of our analysis. We seek to model an \ac{ipd}, implementing Time-Dependent Checkpointing (TDC)~\cite[Fig.~3]{balsamo:hibernus++:ieee_transactions:2016},\cite{subasi:general_checkpoint:cluster:2017} where the \ac{vm} system state is saved to \ac{nvm} after a certain number of clock ticks have passed, and observe how fundamental system parameters (failure rate, checkpointing overhead, etc.) affect the freshness of locally sensed data. We also look to compare the mixed-memory architecture of our \ac{ipd} with single-type \ac{vm} and single-type \ac{nvm} memory structures. We go on to introduce an alternative \ac{tdc} scheme, \ac{sfc}, and consider how this can improve system resilience in unpredictable environmental conditions.

Our work builds on~\cite{ozel:timely:isit:2020} by accounting for the mixed-memory nature of many battery-free devices and the checkpointing schemes used to move data between memory types. To the best of our knowledge, this is the first evaluation and comparison of \ac{aoi} in \acp{ipd} with mixed-memory architecture. The efficiency of \ac{tdc} has been considered in other areas of research, notably for distributed stream processing~\cite{jayasekara:distributed_stream_processing:fgcs:2020} (which looked to minimise system utilization) and also in \ac{hpc} applications~\cite{di:hpc_applications:ipdps:2014} (which used \emph{wall-clock length} as the objective). However, this form of checkpointing has not been analysed for transiently-operating embedded devices, or with system freshness as the core tenet of consideration---which forms the premise of our work. 

In this paper we identify the average Peak Age of Information (\ac{paoi}) and average \ac{aoi} for an \ac{ipd} that uses \ac{tdc}. These results allow us to better understand the role of mixed-memory architecture in \acp{ipd} and how the inter-checkpointing time can be best adjusted to minimise \ac{aoi}. We also show that mixed-memory architecture can improve the system freshness of an \ac{ipd} compared with single-type \ac{vm}, however it cannot surpass entirely \ac{nvm} architecture. We further show that \ac{sfc} can improve system performance in unpredictable environmental conditions compared to inappropriately assigned single-frequency checkpoint intervals. 

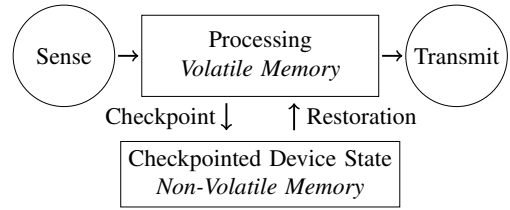
\begin{figure}
\begin{center}
\leftskip-0.4pt
\scalebox{.9}{\usetikzlibrary {shapes.geometric}
\begin{tikzpicture}[fill=white!20]
draw[help lines] (-1,-2) grid (6,3);
\path
%Sense
(-0.4,0.5) node(a) [circle,draw,fill, minimum size=15mm,inner sep=0pt] {Sense}
%Volatile memory
(2.5,0.5) node(b) [rectangle,draw,fill,align=center,minimum height=12mm,minimum width=35mm] {Processing \\ \emph{Volatile Memory}}
%Non-volatile memory
(2.5,-1.25) node(c) [rectangle,draw,fill,align=center,minimum width=35mm] {Checkpointed Device State\\ \emph{Non-Volatile Memory}}
%Transmit
(5.4,0.5) node(d) [circle,draw,fill, minimum size=15mm,inner sep=0pt] {Transmit};
%Sense to VM
\draw[thick,->] (0.4,0.5) -- (0.7,0.5);
%VM to Transmit
\draw[thick,->] (4.3,0.5) -- (4.6,0.5);
%VM to Nvm
\draw[thick,<-] (2,-0.6)--(2,-0.2); 
%NVM to VM
\draw[thick,->] (3,-0.6)--(3,-0.2); 
%Checkpoint and restoration
\coordinate (A) at (1.025,-0.422);
\node at (A){Checkpoint};
\coordinate (B) at (4,-0.39);
\node at (B){Restoration};
\end{tikzpicture}}
\vspace{-1mm}
\caption{System model for an \ac{ipd} that checkpoints its internal system state to protect from data corruption. The device performs sensing, on-board processing (that takes $P_i$ clock ticks to complete), and transmission. Sensing and transmission occur instantaneously. Data stored in \ac{vm} is checkpointed to \ac{nvm} taking $D_i$ clock ticks, and system state restoration from \ac{nvm} to \ac{vm} takes $V_i$ clock ticks. The system suffers frequent power failures.} 
\label{fig:system_diagram}
\end{center}
\vspace{-7mm}
\end{figure}

\section{System Model}\label{sec:System_model}

We consider a communication device, presented in Fig.~\ref{fig:system_diagram}, which is powered by an intermittent energy source (sun, vibrations, temperature gradient, etc.) and consequently suffers frequent power failure. The device has mixed-memory architecture (\ac{vm} and \ac{nvm}) and checkpoints the system state (processor registers, hardware registers, main memory, etc.) from \ac{vm} to \ac{nvm} after a fixed number of clock ticks, where clock ticks act as a base unit for the system's on-board clock. The sensed data is then processed and transmitted, e.g. wirelessly through a low-powered LED~\cite{broadhead:vlc:liot:2020}, to a central collecting unit.

\textbf{Packet Generation/Sensing.} Packets containing data from an on-board sensor are produced as required in a \emph{generate-at-will} type policy---where sensing only occurs when processing of the preceding packet is complete---as considered in~\cite{ozel:timely:isit:2020}. We assume that sensing is instantaneous and that this data is then immediately processed, from which a packet is created. Packets are not dropped due to power failure, since the last computation state can always be restored from \ac{nvm} to \ac{vm}.

\textbf{System Operation.} \emph{Data processing} occurs in the volatile memory of Fig.~\ref{fig:system_diagram} and encompasses a number of possible steps; such as peripheral control, filtering, and packet framing. To reduce unnecessary system complexity we have considered all processing as one stage that takes $P_i$ clock ticks to complete. The time between data sensing and packet transmission, the \emph{completion time}, is $S_i$ clock ticks and the time between two sequential packet transmissions, the \emph{inter-completion time}, is $Y_i$ clock ticks, where the idle time between a packet transmission and the next generated packet is $I_i$ clock ticks. 

\textbf{System Failure.} \acp{ipd} suffer frequent power failures due to a lack of available harvested energy. We do not explicitly consider energy as part of our system model, as in~\cite[Section V]{yates:survey:arxiv:2020},~\cite{ozel:timely:isit:2020,rafiee:energy_harvesting_node:spawc:2020}, rather assuming that the depleted energy will cause a number of random power failures. An amount of processing time $L_{i,j}$ clock ticks, where the $i$ represents the overall cycle number and $j$ the fail number within the cycle, will be wasted for each fail (since the updated system state including this processing has not been checkpointed from \ac{vm} to \ac{nvm} before failure). Our system will also be inactive for a period of $R_{i,j}$ clock ticks after failure. During power failure no new data is generated since the device cannot perform sensing. Once power is restored the system takes $V_{i,j}$ clock ticks to restore the last checkpointed system state from \ac{nvm} to \ac{vm}. For simplicity of analysis we assume that $V_{i,1} = V_{i,j}=V$ for all $j$. In practice this term would likely be fixed by design. 

\begin{figure}
\begin{center}
\scalebox{.9}{\usetikzlibrary {shapes.geometric}
\begin{tikzpicture}[fill=white!20]
\tikzstyle{every node}=[font=\small]
%%%%%%%%%%%%%%%%% Main %%%%%%%%%%%%%%%%%%%%%%%%%
%Main x-axis 
\draw[->,thick] (-3.4,2.2)--(4,2.2) node[below]{Time}; %Main y-axis
\draw[->,thick] (-4,2.2)--(-4,5.6) node[above]{AoI}; %Nodes on main x-axis from L to R
\coordinate (B) at (-3,2.2);
\node at (B){$\bigtimes$};
\coordinate (J) at (-0.3,2.19);
\node at (J){$\bullet$};
\coordinate (A) at (0,2.2);
\node at (A){$\bigtimes$};
\coordinate (H) at (1,2.19);
\node at (H){$\bullet$};
\coordinate (C) at (1.5,2.2);
\node at (C){$\bigtimes$};
\coordinate (I) at (2.5,2.19);
\node at (I){$\bullet$};
\coordinate (X) at (-0.2,3.3);
\node at (X){\textbf{$Q_{A,i}$}};
\coordinate (Y) at (2,3.215);
\node at (Y){\textbf{$Q_{A,i+1}$}};
%Main x-axis break 
\draw[-,dashed,thick] (-4,2.2)--(-3.4,2.2) node[right]{};
%Main first peak 
\draw[-,dashed] (-3,2.2)--(-0.3,4.36) node[above]{};
\draw[-,thick] (-0.3,4.36)--(-0.3,5.5) node[above]{};
\draw[-,thick] (-3.3,3.1)--(-0.3,5.5) node[above]{};
\draw[-,dashed] (-3.5,2.94)--(-3.3,3.1) node[above]{};
%\draw[-,thick] (-0.3,4.9)--(-0.3,5.9) node[above]{};
\draw[-,thick] (-0.3,4.36)--(1,5.4) node[above]{};
\draw[-,thick] (1,5.4)--(1,3) node[above]{};
%Main second peak 
\draw[-,dashed] (0,2.2)--(0.2,2.36) node[above]{};
\draw[-,dashed] (0.6,2.68)--(1,3) node[above]{};
\draw[-,thick] (1,3)--(2.5,4.2) node[above]{};
\draw[-,thick] (2.5,4.2)--(2.5,3) node[above]{};
%Main third peak 
\draw[-,dashed] (1.5,2.2)--(1.75,2.4) node[above]{};
\draw[-,dashed] (2.15,2.72)--(2.5,3) node[above]{};
\draw[-,thick] (2.5,3)--(3.5,3.8) node[above]{};
%Main dashed line box
\draw[CadetBlue,-,thick,dashed] (-3.2,1.5)--(-3.2,2.5) node[above]{};
\draw[CadetBlue,-,thick,dashed] (-3.2,1.5)--(0.2,1.5) node[above]{};
\draw[CadetBlue,-,thick,dashed] (-3.2,2.5)--(0.2,2.5) node[above]{};
\draw[CadetBlue,-,thick,dashed] (0.2,1.5)--(0.2,2.5) node[above]{};
%Main labels L to R
\draw[decoration={brace,mirror,raise=5pt},decorate]
  (-3,2.2) -- node[below=6pt] {$S_{i-1}$} (-0.3,2.2);
\draw[decoration={brace,mirror,raise=5pt},decorate]
  (-0.3,2.2) -- node[below=6pt] {$I_{i}$} (0,2.2);
\draw[decoration={brace,mirror,raise=5pt},decorate]
  (0,2.2) -- node[below=6pt] {$S_{i}$} (1,2.2);
\draw[decoration={brace,mirror,raise=5pt},decorate]
  (1,2.2) -- node[below=6pt] {$I_{i+1}$} (1.5,2.2); 
\draw[decoration={brace, raise=5pt},decorate]
  (-0.3,2.2) -- node[above=7pt] {$Y_{i}$} (1.05,2.2); 
\draw[decoration={brace, raise=5pt},decorate]
  (1,2.2) -- node[above=7pt] {$Y_{i+1}$} (2.5,2.2); 
\draw[decoration={brace,mirror,raise=5pt},decorate]
  (1.5,2.2) -- node[below=6pt] {$S_{i+1}$} (2.5,2.2);  
%Main filled Area
\begin{pgfonlayer}{bg}
\draw[thick, draw opacity =0, fill=gray!5] (0,2.2) -- (2.5,4.2) -- (2.5,3) -- (1.5,2.2) -- cycle;
\end{pgfonlayer}

%%%%%%%%% Connecting arrow %%%%%%%%%%% 
 \draw[CadetBlue,thick,->] (-1.7,1.35)--(-1.7,1.05) node[above]{};
 
%%%%%%%%%%%%%%%%%  Zoom-in %%%%%%%%%%%%%%%%%%%%%%
%Zoom-in x-axis
\draw[-,thick] (-3.8,0)--(3.4,0) node[below]{};
\draw[-,dashed,thick] (3.4,0)--(3.8,0) node[below]{};
\draw[-,dashed,thick] (-4.0,0)--(-3.8,0) node[below]{};
%Zoom-in Aoi lines
\draw[-,dashed] (-3.5,0)--(-2.5,0.8) node[below]{};
\draw[-,dashed] (2.95,0)--(3.75,0.8) node[below]{};
%Zoom-in filled Area
\begin{pgfonlayer}{bg}
\draw[thick, draw opacity =0, fill=gray!5] (2.95,0) -- (3.75,0.8) -- (3.75,0) -- cycle;
\end{pgfonlayer}
%Zoom-in dashed box 
\draw[CadetBlue,-,thick, dashed] (-4.3,0.9)--(4.3,0.9) node[above]{};
\draw[CadetBlue,-,thick, dashed] (-4.3,-1.3)--(4.3,-1.3) node[above]{};
\draw[CadetBlue,-,thick, dashed] (-4.3,-1.3)--(-4.3,0.9) node[above]{};
\draw[CadetBlue,-,thick, dashed] (4.3,-1.3)--(4.3,0.9) node[above]{};
%Nodes on zoom-in x-axis from L to R 
\coordinate (D) at (-3.5,0);
\node at (D){$\bigtimes$};
\coordinate (K) at (-2.3,0);
\node at (K){$\scriptscriptstyle\largestar$};
\coordinate (L) at (-2,0);
\node at (L){$\scriptscriptstyle\bigstar$};
%\coordinate (M) at (1.6,-0.01);
%\node at (M){$\scriptscriptstyle\bullet$};
\coordinate (G) at (-1.4,-0.01);
\node at (G){$\ddagger$};
\coordinate (Q) at (-0.2,-0.01);
\node at (Q){$\medtriangleup$};
\coordinate (R) at (0.1,-0.01);
\node at (R){$\filledmedtriangleup$};
\coordinate (N) at (1.3,-0.01);
\node at (N){$\scriptscriptstyle\largestar$};
\coordinate (F) at (1.6,-0.01);
\node at (F){$\bullet$};
\coordinate (E) at (2.95,0);
\node at (E){$\bigtimes$};
%Zoom-in labels L to R
\draw[decoration={brace,mirror,raise=5pt},decorate]
  (-3.5,0) -- node[below=6pt] {$K_{i,n}$} (-2.3,0);
\draw[decoration={brace,raise=5pt},decorate]
  (-2.3,0) -- node[above=6pt] {$D$} (-2,0);
\draw[decoration={brace,mirror,raise=5pt},decorate]
  (-2,0) -- node[below=6pt] {$L_{i,j}$} (-1.4,0);
\draw[decoration={brace,mirror,raise=23pt},decorate]
  (-3.5,0.1) -- node[below=24pt] {$S_{i-1}$} (1.6,0.1); 
\draw[decoration={brace,raise=5pt},decorate]
  (-1.4,0) -- node[above=6pt] {$R_{i,j}$} (-0.2,0);
\draw[decoration={brace,raise=5pt},decorate]
  (-0.2,0) -- node[above=6pt] {$V$} (0.1,0);
\draw[decoration={brace,mirror,raise=5pt},decorate]
  (0.1,0) -- node[below=6pt] {$K_{i,n+1}$} (1.3,0);
\draw[decoration={brace,raise=5pt},decorate]
  (1.3,0) -- node[above=6pt] {$D$} (1.6,0);
\draw[decoration={brace,mirror,raise=5pt},decorate]
  (1.6,0) -- node[below=6pt] {$I_{i}$} (2.95,0);
\end{tikzpicture}}
\vspace{-2mm}
\caption{\ac{aoi} evolution for an \ac{ipd} that checkpoints its state with a fixed regularity. Symbol notation: $\bigtimes$ denotes the start of sensing, $\ddagger$ denotes device failure, $\bullet$ denotes the end of the final checkpoint and instantaneous packet transmission, $\scriptscriptstyle\largestar$ denotes the start of checkpointing from \ac{vm} to \ac{nvm}, $\scriptscriptstyle\bigstar$ denotes the end of checkpointing from \ac{vm} to \ac{nvm}, $\medtriangleup$ denotes start of restoration from \ac{nvm} to \ac{vm}, and $\filledmedtriangleup$ denotes the end of restoration from \ac{nvm} to \ac{vm}. All variables are defined in Section~\ref{sec:System_model}.}
\label{fig:system_evolution}
\end{center}
\vspace{-7mm}
\end{figure}

\textbf{Checkpointing Strategy.} The system will save the current device state stored in \ac{vm} to \ac{nvm} with a pre-determined regularity. This inter-checkpoint time is given by $K_{i,n}$ clock ticks where $n$ is the checkpoint number within cycle $i$. The action of checkpointing will also take a fixed amount of $D_{i,n}$ clock ticks to complete. Here, for simplicity of analysis, we assume that $D_{i,1}=D_{i,n}=D$ for all $n$. This is consistent with, for example,~\cite[Fig.~3]{balsamo:hibernus++:ieee_transactions:2016}. We also assume that $P_i$ is a multiple of the inter-checkpointing time.

\textbf{Transmission.} Our model considers the transmissions of packets to be instantaneous and occurring after a final checkpoint. 

\section{Age of Information Background}\label{sec:Age_of_Information_Background}

Let us now re-introduce several previously derived results that form the basis of our analysis. Canonically we define \ac{aoi} as~\cite[Eq. (1)]{ozel:timely:isit:2020}~\cite[Section II]{yates:survey:arxiv:2020}
\begin{equation}
\label{eq:AOI_basic}
    \Delta (t) = t - u(t), 
\end{equation}
where $\Delta (t)$ is the \ac{aoi} of data sensed by the device, $t$ is the current time, and $u(t)$ is the time stamp of the last completed packet. The \ac{aoi} time evolution for our proposed system model is depicted in Fig.~\ref{fig:system_evolution}. Using this graphical representation we can calculate the expectation of (\ref{eq:AOI_basic}) as follows. Let us denote a trapezium $Q_{A,i}$, whose area is given by
\begin{equation} \label{eq:Q_first}
    Q_{A,i} = \frac{1}{2}\left((S_{i-1}+Y_i)^2-S_i^2\right).
\end{equation} 
This is the isosceles triangle created by $S_{i-1}+Y_i$ minus the smaller isosceles triangle with base $S_i$. The area $Q_{A,i+1}$ is highlighted in Fig.~\ref{fig:system_evolution} as an example. Given that $S_{i-1}$ and $Y_i$ are independent and $S_{i-1}$ and $S_i$ are identically distributed, as considered in~\cite[Eq. (5)]{ozel:timely:isit:2020}, the expectation of (\ref{eq:Q_first}) becomes
\begin{equation} \label{eq:expectation_Q}
    \EX[Q_{A,i}] = \frac{1}{2}\EX[Y_i^2] + \EX[S_{i-1}]E[Y_i].
\end{equation}
As argued in~\cite[Eq. (6)]{ozel:timely:isit:2020}, and in concordance with the geometry presented in Fig.~\ref{fig:system_evolution}, the average \ac{aoi} is
\begin{equation} \label{eq:Delta_final}
    \EX[\Delta] = \lambda \EX[Q_{A,i}],
\end{equation}
where $\lambda$ is the interarrival time of packets to the device. Since our system model considers packet generation, rather than packet arrival, the distribution of $\lambda$ will be predicated on the distribution of time between sensing (which is equivalent to the distribution of time between transmission) hence \(\lambda = \frac{1}{E[Y]}\). Here we introduce $Y$ and $S$ for the \emph{inter-completion time} and \emph{completion time}, respectively, when taking their expected values, $\EX[Y]$ and $\EX[S]$, as time tends to infinity. This is possible due to the ergodicity of the system. As such~(\ref{eq:Delta_final}) simplifies to~\cite[Eq. (6)]{ozel:timely:isit:2020}
\begin{equation} \label{eq:Delta_simple}
    \EX[\Delta] = \frac{\EX[Y^2]}{2 \EX[Y]} + \EX[S].
\end{equation}
The average \ac{paoi}, a more manageable measure of data freshness compared to average \ac{aoi}, is given by~\cite[Eq. (7)]{ozel:timely:isit:2020}~\cite[Eq. (8)]{yates:survey:arxiv:2020}
\begin{equation} \label{eq:AoI_Peak}
    \EX[\Delta^{\text{Peak}}] = \EX[Y] + \EX[S].
\end{equation} 

\section{Completion and Inter-Completion Time} \label{sec:Completion and Inter-Completion Time}

Given~(\ref{eq:Delta_simple}) and~(\ref{eq:AoI_Peak}) it is imperative that we find expressions for $Y_i$ and $S_i$ from which we can calculate their expected values, $\EX[Y]$ and $\EX[S]$, over many cycles. From Fig.~\ref{fig:system_evolution} we see that the \emph{inter-completion time} for cycle $i$ is
\vspace{-3mm}
\begin{multline}\label{eq:Y}
    Y_i=\underbrace{\vphantom{\sum_{j=1}^{f}\left(L_{i,j}+R_{i,j}+V\right)}I_i}_{\text{Idle time}}\!+\!\underbrace{\sum_{j=1}^{f}\left(L_{i,j}+R_{i,j}+V\right)}_{\text{Events associated with failure}}\\ 
    +\underbrace{\vphantom{\sum_{j=1}^{f}\left(L_{i,j}+R_{i,j}+V\right)}\sum_{n=1}^{h}K_{i,n}}_{\text{Processing $\triangleq P_i$}}\!+\! \underbrace{\vphantom{\sum_{j=1}^{f}\left(L_{i,j}+R_{i,j}+V\right)}Dh,}_{\text{Checkpointing}}
\end{multline}
where $f$ and $h$ are the number of system fails and the number of successfully performed checkpoints during the period of time $Y_i$, respectively. The time between data being sensed and a packet being transmitted, the \emph{completion time}, for cycle $i$ is
\begin{equation}\label{eq:S}
    S_i= Y_i - I_{i}. 
\end{equation}

\section{Expectation of Completion and Inter-Completion Time} \label{sec:Expectation of Completion and Inter-Completion time}

To identify the expected values of \ac{aoi} in~(\ref{eq:Delta_simple}) and~(\ref{eq:AoI_Peak}), and hence understand the freshness of data produced by the modelled \ac{ipd}, we must first find the expected values of~(\ref{eq:Y}) and~(\ref{eq:S}) over many cycles. We assume that all variables in~(\ref{eq:Y}) and~(\ref{eq:S}) have well-defined means and known variance. Due to the assumed ergodicity of the system we can also simplify notation when taking the expected values of variables (for example, the expectation of $I_i$ is $\EX[I]$ as $t$ tends to infinity). By following the approach of~\cite[Eq. (10)]{ozel:timely:isit:2020}, we use Wald's identity to find that the expected values of \emph{inter-completion time} and \emph{completion time} are
\begin{align} 
    \EX[Y] &=\EX[I] + \EX[f]\left(\EX[L] + \EX[R] + V \right) \label{eq:exp_Y}\nonumber \\
    &\qquad+ \EX[h]\EX[K] + D\EX[h] ,\\
    \EX[S] &= \EX[Y] - \EX[I],
    \label{eq:exp_S}
\end{align}
respectively. By definition, the amount of wasted processing per failure will take a value $L_{i,j}\in[1\isep K_{i,\theta}+D]$ clock ticks where $L_{i,j}\in\mathbb{N}^+$ and $\theta$ is the checkpoint number of the processing cycle started but interrupted in $L_{i,j}$. For instance in Fig.~\ref{fig:system_evolution}, $\theta=n+1$ since the $K_{i,n+1}$ cycle was initially started in $L_{i,j}$, however could not finish before failure. We also assume that all possible values of $L_{i,j}$ are equally likely. Therefore, based on these assumptions, the expectation of wasted processing time per failure is
\begin{equation} \label{eq:exp_general}
    \EX[L] = \frac{\EX[K]+D+1}{2}.
\end{equation}
Given the definition of processing time $P_i$, given in the second summation of~(\ref{eq:Y}), its expected value will be 
\begin{equation} \label{eq:exp_P}
    \EX[P] = \EX[h]\EX[K].
\end{equation}
Substituting~(\ref{eq:exp_general}) and~(\ref{eq:exp_P}) into~(\ref{eq:exp_Y}) yields 
\begin{equation} \label{eq:exp_Y_update}
\EX[Y] =\ \EX[f]\left(C_1 +\frac{\EX[P]}{2\EX[h]}\right) +C_2 + D\EX[h],
\end{equation}
where $C_1 = \EX[R] + V + \frac{D+1}{2}$, and $C_2 = \EX[I]+ \EX[P]$ are defined for compactness of presentation.

\section{Expectation of Average \ac{paoi}}\label{sec:Expectation of peak AoI}

Given~(\ref{eq:AoI_Peak}) and expressions~(\ref{eq:exp_S}) and~(\ref{eq:exp_Y_update}) the average \ac{paoi} for our mixed-memory \ac{ipd} is therefore
\begin{multline} \label{eq:exp_delta_p_x}
\EX[\Delta^{\text{Peak}}]_{\text{MM}} = 2\EX[f]\left(C_1 +\frac{\EX[P]}{2\EX[h]}\right) +C_2 \\\qquad + \EX[P] + 2D\EX[h]. 
\end{multline}
We can see from this expression the clear conflict between over and under checkpointing. If failure occurs, which is a core phenomenon associated with an \ac{ipd}, then an increase in the checkpointing frequency (meaning an increase in $\EX[h]$) increases $\EX[\Delta^{\text{Peak}}]_{\text{MM}}$ through the final $2D\EX[h]$ term---i.e. the overhead associated with checkpointing. In contrast, an increase in $\EX[h]$ also reduces the $\frac{\EX[P]}{2\EX[h]}$ term, since more frequent checkpointing reduces the expected wasted processing time per power failure. From~(\ref{eq:exp_delta_p_x}) we also observe that the constant $C_1$ (and hence the constants $V$ and $D$) increase the average \ac{paoi}, so these should be minimised by systems designers. Further, increases in $\EX[I]$ and $\EX[R]$ will also increase $\EX[\Delta^{\text{Peak}}]_{\text{MM}}$, however we consider these to be outside the control of the designer. 

\textbf{Minimising the Average \ac{paoi}.} Due to the conflict of over and under checkpointing we can find a value of $\EX[h]$ to minimise the average \ac{paoi} as follows. Taking the derivative of~(\ref{eq:exp_delta_p_x}), with respect to $\EX[h]$ we have
\begin{equation}
    \frac{d\EX[\Delta^{\text{Peak}}]_{\text{MM}}}{d\EX[h]} = 2D - \frac{\EX[f]\EX[P]}{\EX[h]^2}. 
    \label{eq:differentiation}
\end{equation}
Then, letting (\ref{eq:differentiation}) equal to zero and solving for $\EX[h]$, we find that~(\ref{eq:exp_delta_p_x}) is minimised when
\begin{equation} \label{eq:exp_h_min_simple}
  \EX[h] = \sqrt{\frac{\EX[f]\EX[P]}{2D}}.
\end{equation}
This expression is comparable to~\cite[Eq. (7)]{di:hpc_applications:ipdps:2014} and shows that checkpoint optimisation is not inherently changed by the unique characteristics of the sensor device. Expression (\ref{eq:exp_h_min_simple}) shows that the optimum number of checkpoints in a cycle is based on three fundamental parameters of the system; $\EX[f]$, $\EX[P]$, and $\EX[D]$. This is consistent with intuition since a decrease in the expected number of failures would mean fewer required checkpoints, an increase in processing time would necessitate more checkpoints, and an increase in the checkpoint overhead would reduce its desirability. 
\section{Expectation of Average \ac{aoi}}
\label{sec:Expectation_of_average_AoI}

From~(\ref{eq:Delta_simple}) we can find an expression for the average \ac{aoi} of our system by substituting in~(\ref{eq:exp_Y_update}) and~(\ref{eq:exp_S}), and 
by finding $\EX[Y^2]$. Since we have assumed that the variance of each term in $Y$ is known, we are able to use the definition of variance to find $\EX[Y^2]$ from 
\begin{equation} \label{eq:exp_x_squared}
    \EX[Y^2] = \var(Y) + \EX[Y]^2. 
\end{equation}
By substituting~(\ref{eq:exp_x_squared}) into~(\ref{eq:Delta_simple}) the expression for the average \ac{aoi} becomes 
\begin{equation} \label{eq:exp_AoI_using_variance}
    \EX[\Delta] = \frac{\var(Y)}{2 \EX[Y]} + \frac{3\EX[Y]}{2} -\EX[I].
\end{equation}
As such, by substitution, we find that the average \ac{aoi} of the system is
\vspace{-3mm}
\begin{multline}
\EX[\Delta]_{\text{MM}} = \frac{\var(Y)}{2C_3 +\frac{C_4}{\EX[h]} + 2D\EX[h]} - \EX[I]\\
+\frac{3}{2}\left(C_3 +\frac{C_4}{2\EX[h]} + D\EX[h]\right),
\label{eq:exp_AoI_average_reduced}
\end{multline}
where $C_3 = \EX[f]C_1 + C_2$ and $C_4 = \EX[f]\EX[P]$ are defined for compactness of presentation. From~(\ref{eq:exp_AoI_average_reduced}) we see that the average \ac{aoi}, as with the average \ac{paoi}, is dependent on $\EX[f]$ and $\EX[h]$. Whilst minimisation of this expression with respect to $\EX[h]$ is not presented here, a tractable solution can be found by setting the derivative of~(\ref{eq:exp_AoI_average_reduced}) to zero and solving for $\EX[h]$. 

\begin{figure*}[!ht]
\vspace{-7mm}
\centering
\subfloat[Impact of Harvested Energy]{\includegraphics[width=0.32\textwidth]{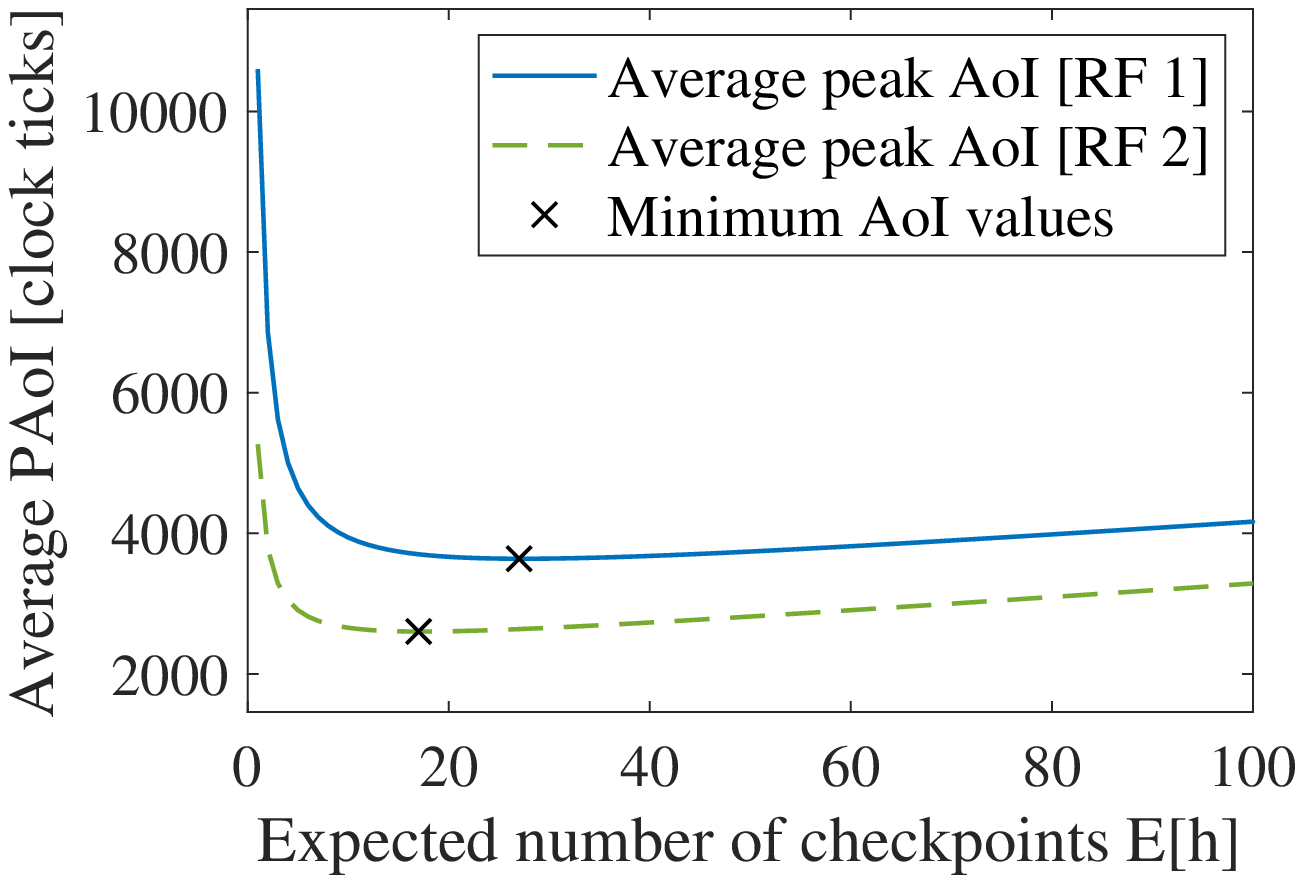}\label{fig:Aoi_peak_and_average_result}}
\hfill
\subfloat[Impact of Memory Structure]{\includegraphics[width=0.32\textwidth]{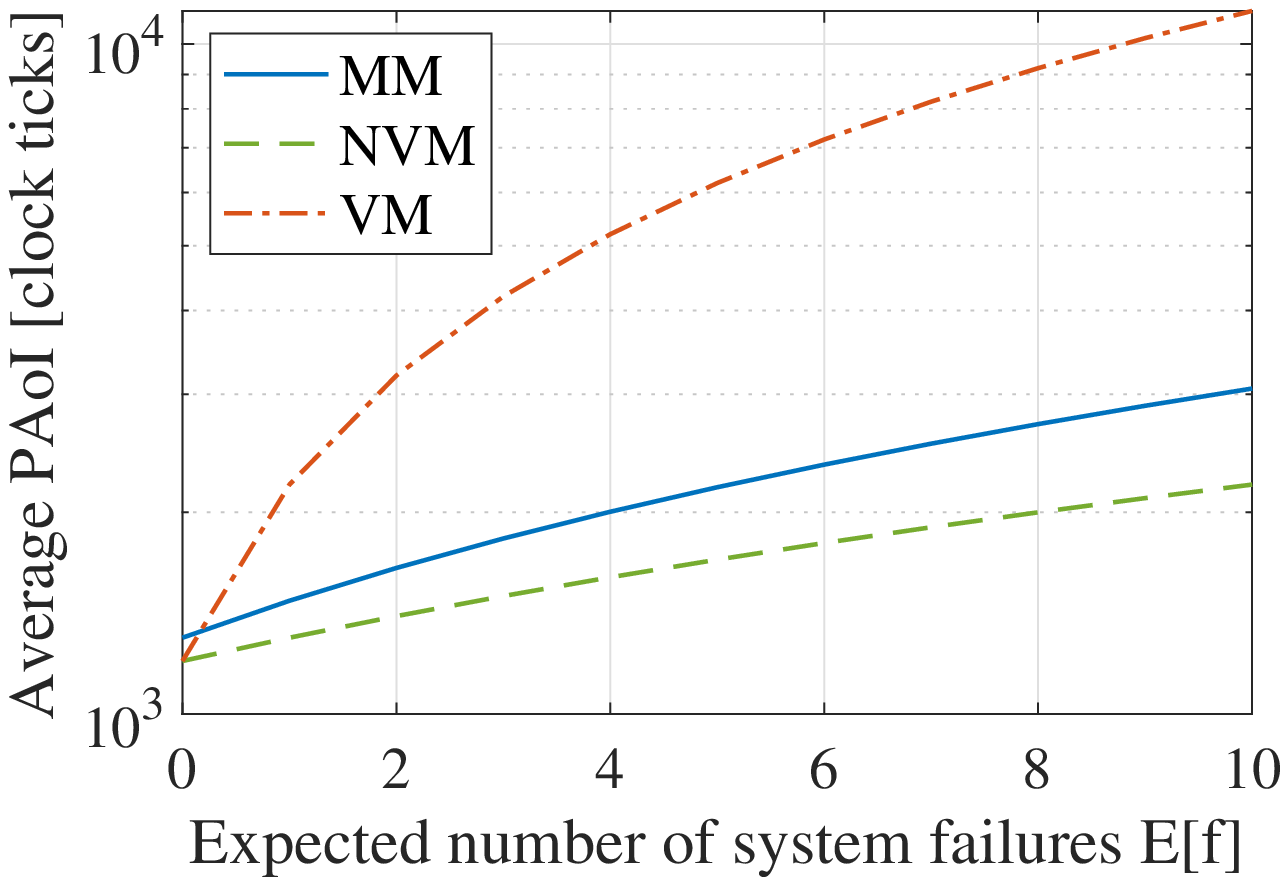}\label{fig:figure4}}
\hfill
\subfloat[Impact of Checkpointing Strategy]{\includegraphics[width=0.32\textwidth]{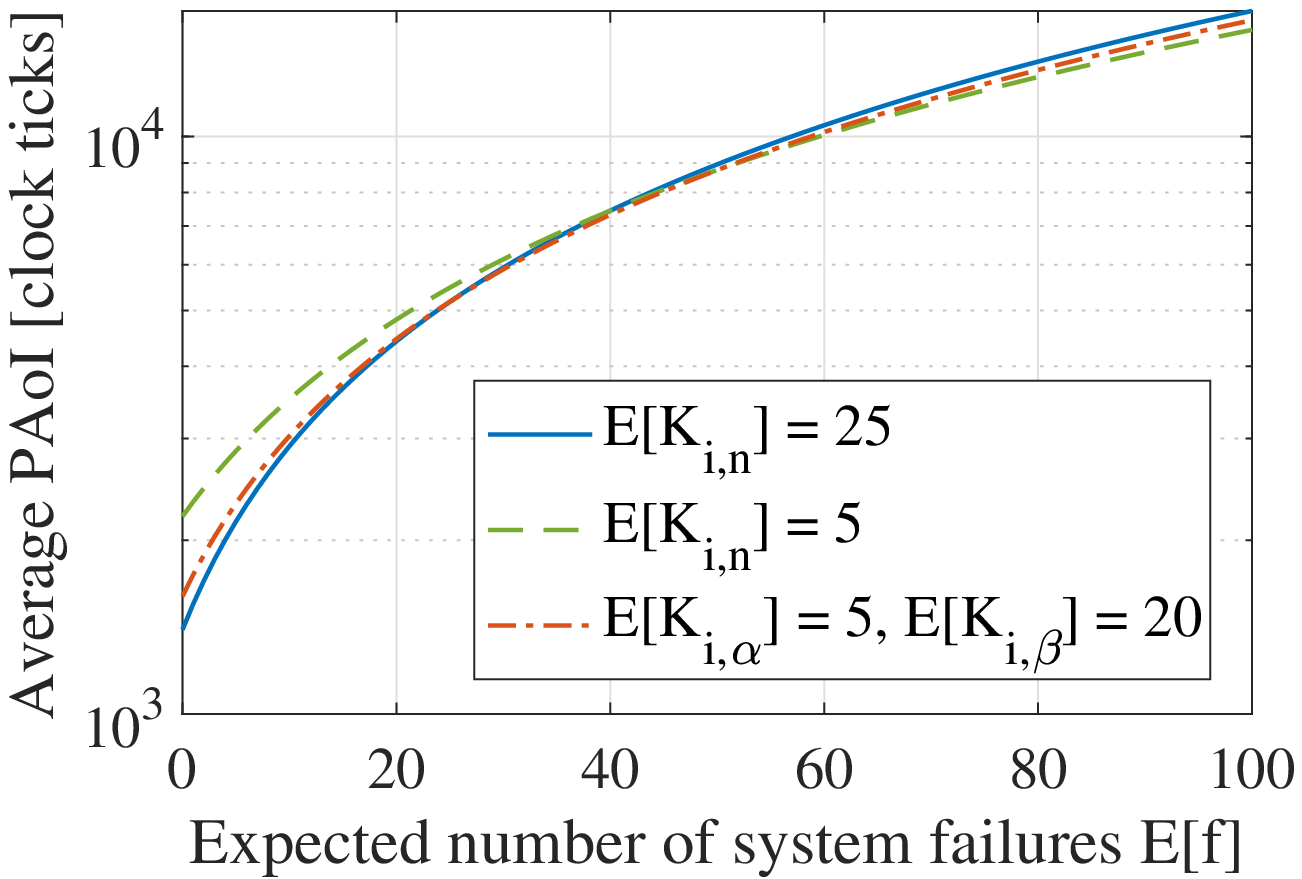}\label{fig:Split}}
\caption{Set of example numerical result (its source code is available at~\cite{broadhead:source-code:github:2021}). (a) \emph{Scenario RF 2} conditions require fewer checkpoints than \emph{Scenario RF 1}. Under-checkpointing causes significant increase in average \ac{paoi}. (b) Mixed-memory performs better than \ac{vm} for most failure conditions. (c) \ac{sfc} is better than inaccurately selected single-frequency.}
\label{fig:numerical-results}
\vspace{-4mm}
\end{figure*}

\section{Peak Age of Information: Mixed-Memory Versus Single-Memory Architecture}
\label{sec:comparison_to_no_checkpoint}

Having identified an expression for the average \ac{paoi} of our model in~(\ref{eq:exp_delta_p_x}) we proceed to examine to what extent checkpointing in mixed-memory architecture has affected the performance of our \ac{ipd} compared to a single-memory device. Whilst in practice, typical commercially available devices are mixed-memory~\cite[Section 1]{lucia:survey:snapl:2017}, we can consider two possible alternatives, a single-memory \ac{ipd} entirely comprised of \ac{nvm} \cite{su:nvp:date:2017,chien:nvm:vlsi-dat:2016}, or a single-memory \ac{ipd} entirely comprised of \ac{vm}. We assume that both types of memory are able to perform processing at the same rate.

\begin{lemma}
An entirely \ac{nvm} \ac{ipd} will have a lower or equal average \ac{paoi} than a checkpointing mixed-memory \ac{ipd}.
\end{lemma}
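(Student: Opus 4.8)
The plan is to specialise the system model of Section~\ref{sec:System_model} to a device whose working state resides permanently in \ac{nvm}, write down its average \ac{paoi}, and then compare it term-by-term with the mixed-memory expression~(\ref{eq:exp_delta_p_x}). The key observation is that an entirely \ac{nvm} device never checkpoints and never restores: every computed result is already persistent, so no in-flight processing is lost on a power failure ($L_{i,j}=0$, hence $\EX[L]=0$), the restoration time vanishes ($V=0$), and the checkpoint overhead disappears entirely (there is no $D\EX[h]$ contribution, equivalently $D=0$). Crucially the comparison is made under identical operating conditions, so the expected failure count $\EX[f]$, the expected idle time $\EX[I]$, the expected outage duration $\EX[R]$ and the expected processing $\EX[P]$ are common to both devices (the paper already assumes both memory types process at the same rate).

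First I would put $\EX[L]=0$, $V=0$, drop the $D\EX[h]$ term in~(\ref{eq:exp_Y}), and use $\EX[h]\EX[K]=\EX[P]$ from~(\ref{eq:exp_P}), obtaining $\EX[Y]_{\text{NVM}}=\EX[I]+\EX[f]\EX[R]+\EX[P]$; then~(\ref{eq:exp_S}) gives $\EX[S]_{\text{NVM}}=\EX[f]\EX[R]+\EX[P]$. Substituting into~(\ref{eq:AoI_Peak}),
\[
  \EX[\Delta^{\text{Peak}}]_{\text{NVM}} = \EX[I] + 2\EX[f]\EX[R] + 2\EX[P].
\]
Next, expanding $C_1$ and $C_2$ in~(\ref{eq:exp_delta_p_x}) and subtracting, one finds
\[
  \EX[\Delta^{\text{Peak}}]_{\text{MM}} - \EX[\Delta^{\text{Peak}}]_{\text{NVM}}
  = 2\EX[f]V + \EX[f](D+1) + \frac{\EX[f]\EX[P]}{\EX[h]} + 2D\EX[h],
\]
which is a sum of non-negative quantities ($\EX[h]\ge 1$ since at least the final checkpoint precedes transmission), hence non-negative. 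This establishes the claim, with equality only in the degenerate case $\EX[f]=0$ and $D=0$.

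The main obstacle I expect is not the algebra but the modelling step: one has to argue convincingly that $V$, $D$ and $\EX[L]$ genuinely collapse to zero for the entirely \ac{nvm} device while $\EX[f]$, $\EX[I]$, $\EX[R]$ and $\EX[P]$ are left untouched, i.e.\ that the comparison is fair rather than an artefact of changing several parameters at once. A secondary point worth a sentence in the write-up is that $\EX[h]$ is simply undefined for the \ac{nvm} device, but this causes no trouble because every $\EX[h]$-dependent term in the difference above sits on the mixed-memory side; if desired, minimising that difference over $\EX[h]$ via~(\ref{eq:exp_h_min_simple}) shows that even an optimally-tuned mixed-memory \ac{ipd} cannot match the \ac{nvm} one.
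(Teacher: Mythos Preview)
Your proposal is correct and matches the paper's own proof essentially line for line: the paper likewise writes the \ac{nvm} inter-completion time as $Y_i=I_i+\sum_{j=1}^{f}R_{i,j}+P_i$, obtains $\EX[\Delta^{\text{Peak}}]_{\text{NVM}}=\EX[I]+2\EX[f]\EX[R]+2\EX[P]$, and then subtracts from~(\ref{eq:exp_delta_p_x}) to get the same non-negative difference $2D\EX[h]+\EX[f]\bigl(2V+D+1+\EX[P]/\EX[h]\bigr)$. Your additional remarks on the fairness of the comparison and on the irrelevance of $\EX[h]$ to the \ac{nvm} device are sound and go slightly beyond what the paper states.
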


\begin{proof}
Using the same formulation as~(\ref{eq:Y}), the \emph{inter-completion time} for such an \ac{nvm} \ac{ipd} would be
\begin{equation}\label{eq:Y_nvm}
    Y_i = I_i + \sum_{j=1}^{f}R_{i,j} + P_i,
\end{equation}
since the device does not checkpoint and the only impact of failure is the system off-time. The \emph{completion time} would take the same form as~(\ref{eq:S}). Following the same steps as Sections~\ref{sec:Expectation of Completion and Inter-Completion time} and~\ref{sec:Expectation of peak AoI}, the average \ac{paoi} of the device would be 
\begin{equation}\label{eq:delta_nvm}
    \EX[\Delta^{\text{Peak}}]_{\text{\ac{nvm}}} = \EX[I] + 2\EX[f]\EX[R] + 2\EX[P].
\end{equation}
We can compare the above expression with the average \ac{paoi} of our system by finding the difference between~(\ref{eq:exp_delta_p_x}) and~(\ref{eq:delta_nvm}), i.e. 
\begin{multline}
    \EX[\Delta^{\text{Peak}}]_{\text{MM}}-\EX[\Delta^{\text{Peak}}]_{\text{\ac{nvm}}} = 2D\EX[h]\\ + \EX[f]\left(2V+D+1+\frac{\EX[P]}{\EX[h]}\right),
\end{multline}
which shows that
\begin{equation}
\EX[\Delta^{\text{Peak}}]_{\text{\ac{nvm}}}\leq\EX[\Delta^{\text{Peak}}]_{\text{MM}} \forall \EX[f],\EX[h],
\end{equation}
and hence a system comprised of entirely \ac{nvm} would always perform better than or equal to the mixed-memory system.
\end{proof}

\begin{lemma}
Under certain environmental conditions a mixed-memory \ac{ipd} will have a lower average \ac{paoi} than a (single-memory) VM \ac{ipd}.
\end{lemma}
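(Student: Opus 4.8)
The plan is to follow the template of the preceding lemma: derive the \emph{inter-completion time} of a purely volatile device, push it through Wald's identity exactly as in Sections~\ref{sec:Expectation of Completion and Inter-Completion time}--\ref{sec:Expectation of peak AoI} to obtain its average \ac{paoi}, subtract that from~(\ref{eq:exp_delta_p_x}), and then exhibit an explicit parameter regime in which the resulting difference is positive (so that the mixed-memory device is the fresher one).

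For the \ac{vm}-only model there is no checkpointing, hence no $D$ overhead and no $\EX[h]$ term, and there is no saved state to reload, hence no $V$ term. The one structural difference from~(\ref{eq:Y_nvm}) is that a power failure now discards \emph{all} processing performed since the start of the current cycle's processing phase rather than only the work since the last checkpoint; invoking the same ``all values equally likely'' assumption that produced~(\ref{eq:exp_general}), the wasted processing per failure is then uniform on $[1\isep P_i]$, so $\EX[L_{\text{VM}}]=(\EX[P]+1)/2$. Thus
\begin{equation*}
Y_i = I_i + \sum_{j=1}^{f}\!\left(L_{i,j}+R_{i,j}\right) + P_i ,
\end{equation*}
and repeating the Wald's-identity step gives $\EX[Y]_{\text{VM}} = \EX[I] + \EX[f]\!\left(\tfrac{\EX[P]+1}{2}+\EX[R]\right) + \EX[P]$, so that $\EX[\Delta^{\text{Peak}}]_{\text{VM}} = 2\EX[Y]_{\text{VM}}-\EX[I]$ by the same geometry that led to~(\ref{eq:exp_delta_p_x}).

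Subtracting, the $\EX[I]$, $2\EX[f]\EX[R]$ and $2\EX[P]$ contributions cancel and, after collecting terms, I expect
\begin{equation*}
\EX[\Delta^{\text{Peak}}]_{\text{MM}} - \EX[\Delta^{\text{Peak}}]_{\text{VM}} = 2D\EX[h] + \EX[f]\!\left(2V + D + \frac{\EX[P]}{\EX[h]} - \EX[P]\right).
\end{equation*}
The mixed-memory device is therefore strictly fresher iff $\EX[f]\,\EX[P]\!\left(1-\tfrac{1}{\EX[h]}\right) > \EX[f](2V+D) + 2D\EX[h]$. To close the argument I would note that, for any fixed $\EX[h]>1$, $\EX[P]$, $V$, $D$ satisfying $\EX[P](1-1/\EX[h]) > 2V+D$ (long processing relative to the checkpoint/restore overhead), the left side grows faster in $\EX[f]$ than the right, so there is a finite threshold $\EX[f]^{*}=\frac{2D\EX[h]}{\EX[P](1-1/\EX[h])-(2V+D)}$ above which the inequality holds --- that is, a sufficiently harsh energy environment. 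Optionally I would substitute the optimal $\EX[h]$ of~(\ref{eq:exp_h_min_simple}) to restate the condition purely in terms of $\EX[f]$, $\EX[P]$, $D$, $V$.

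The main obstacle is not the algebra but pinning down the volatile failure model: one must justify that after each failure the device re-enters processing from scratch, that the discarded-work window is the full $P_i$ so $L_{\text{VM}}$ is uniform on $[1\isep P_i]$, and that Wald's identity still applies because $f$ is a stopping time independent of the per-failure increments --- exactly the hypotheses already granted for the mixed-memory and \ac{nvm} cases. I would therefore state these assumptions explicitly and appeal to the same ergodicity argument used around~(\ref{eq:exp_Y}) rather than re-deriving them, and then simply read off the sufficient condition above.
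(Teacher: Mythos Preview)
Your approach mirrors the paper's: derive the \ac{vm} inter-completion time, apply Wald's identity, form the \ac{paoi}, and compare with~(\ref{eq:exp_delta_p_x}). The one substantive discrepancy is in the failure model. The paper argues that in a purely volatile device a power failure forces a complete restart including a \emph{re-sense}, and that each re-sense is preceded by its own idle/boot interval; accordingly the paper's $Y_i$ carries an extra $I_{i,j}$ inside the failure sum, giving $\EX[\Delta^{\text{Peak}}]_{\text{\ac{vm}}}$ a further $2\EX[f]\EX[I]$ term that your expression lacks. Your omission does not break the lemma---if anything you are proving the stronger statement, since you compare against a more optimistic \ac{vm} baseline---but you should either adopt the paper's convention or state explicitly that you assume immediate re-sensing after recovery.

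Beyond that, you actually go further than the paper: the paper simply asserts that~(\ref{eq:delta_vm}) ``can be greater than or smaller than~(\ref{eq:exp_delta_p_x}) depending on the selected system parameters'' and concludes existence, whereas you compute the difference and extract an explicit threshold $\EX[f]^{*}$. That sharper quantitative step is a genuine improvement over the paper's argument.
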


\begin{proof}
A single-memory device comprised of entirely \ac{vm}, following the same formulation as~(\ref{eq:Y}), would have an \emph{inter-completion time} of
\begin{equation}\label{eq:Y_vm}
Y_i = I_i + \sum_{j=1}^{f}(R_{i,j}+\Gamma_{i,j}+I_{i,j}) + P_i, 
\end{equation}
where $\Gamma_{i,j}$ is the amount of wasted processing that occurs due to failure $j$ in cycle $i$ and the system does not checkpoint or restore (rather it re-senses after failure). $I_{i,j}$ is the idle time before the system re-senses after failure $j$. The expected value of $\Gamma_{i,j}$ will be $\EX[\Gamma]=\frac{\EX[P]+1}{2}$ since the system could waste up to $P_i$ clock ticks of processing per fail and each amount of wasted processing time is equally likely. The \emph{completion time} would take the same form as~(\ref{eq:S}). Following the same steps as Sections~\ref{sec:Expectation of Completion and Inter-Completion time} and~\ref{sec:Expectation of peak AoI}, the average \ac{paoi} of a single \ac{vm} \ac{ipd} is
\begin{multline}\label{eq:delta_vm}
    \EX[\Delta^{\text{Peak}}]_{\text{\ac{vm}}} = 2\EX[f]\left(\EX[R] + \frac{\EX[P]+1}{2} + \EX[I]\right)\\+\EX[I]+2E[P].
\end{multline}
The above expression can be greater than or smaller than~(\ref{eq:exp_delta_p_x}) depending on the selected system parameters, hence
\begin{equation}
    \exists \EX[f],\EX[h]\ni\EX[\Delta^{\text{Peak}}]_{\text{\ac{vm}}}>\EX[\Delta^{\text{Peak}}]_{\text{MM}},
\end{equation}
and thus there is a set of environments in which checkpointing in mixed-memory architecture is more efficient than not checkpointing in single-memory \ac{vm} architecture. This improvement is most evident when the mixed-memory \ac{ipd} has a low checkpointing overhead and high failure rate. 
\end{proof}

\section{Improving System Resilience in Variable Environmental Conditions}
\label{sec:improving}

Thus far we have considered a checkpointing system that can be optimised based on a known expected number of failures, yet in reality \acp{ipd} are often placed in environments with variable and unpredictable failure rates---making it difficult to pre-determine an optimum rate of checkpointing. We now propose an alternative method of \ac{tdc} for mixed-memory devices to improve system resilience---\acf{sfc}---in which the inter-checkpointing time varies between predefined intervals $\alpha$ and $\beta$. Here we once again use the framework devised in Sections~\ref{sec:System_model} and~\ref{sec:Age_of_Information_Background}. We now assume that the duration of processing between checkpoints, previously $K_{i,n}$, varies between two fixed amounts, $K_{i,\alpha}$ and $K_{i,\beta}$. Then, the processing time $P_i=\sum_{\alpha=1}^{h_\alpha}K_{i,\alpha}+\sum_{\beta=1}^{h_\beta}K_{i,\beta}$. Additionally, the expected wasted processing per fail would be $\EX[L]=p_\alpha \EX[L_{\alpha}]+p_\beta \EX[L_{\beta}]$ where $p_\alpha$ and $p_\beta$ are the probabilities of failure during an $\alpha$ and $\beta$ checkpoint, respectively, such that $p_\alpha=\frac{\EX[K_{\alpha}]}{\EX[K_{\alpha}]+\EX[K_{\beta}]}$ and $p_\beta=\frac{\EX[K_{\beta}]}{\EX[K_{\alpha}]+\EX[K_{\beta}]}$. $\EX[L_{\alpha}]$ and $\EX[L_{\beta}]$ are the expected wasted processing due to a failure in an $\alpha$ and $\beta$ checkpoint, respectively, where $\EX[L_\alpha]=\frac{\EX[K_\alpha]+D+1}{2}$ and $\EX[L_\beta]=\frac{\EX[K_\beta]+D+1}{2}$. This can also be expressed as
\begin{equation}
    \EX[L]=\frac{\EX[K_{i,\alpha}]^2+\EX[K_{i,\beta}]^2}{2(\EX[K_{i,\alpha}]+\EX[K_{i,\beta}])}+\frac{D+1}{2}.
\end{equation}
Following the same derivation of average \ac{paoi} as in Sections~\ref{sec:Expectation of Completion and Inter-Completion time} and~\ref{sec:Expectation of peak AoI}, the average \ac{paoi} of a \ac{sfc} system with two frequencies is
\vspace{-2mm}
\begin{multline}\label{eq:split}
    E[\Delta^{\text{Peak}}]_{\text{MM(split)}}= C_2 +2D(\EX[h_\alpha]+\EX[h_\beta])\\+\EX[P] +2\EX[f]\left(C_1+\frac{\EX[K_{i,\alpha}]^2+\EX[K_{i,\beta}]^2}{2([\EX[K_{i,\alpha}]+\EX[K_{i,\beta}])}\right).
\end{multline}
From this expression we observe that the average \ac{paoi} is dependent on a number of system parameters (including $\EX[P]$, $\EX[f]$, and $D$), however most interesting is the dependence on inter-checkpointing times $\EX[K_{i,\alpha}$] and $\EX[K_{i,\beta}$], which is notably different to the $\frac{\EX[P]}{2\EX[h]}\rightarrow\frac{\EX[K]}{2}$ term in~(\ref{eq:exp_delta_p_x}).

\section{Numerical Results}
\label{sec:Numerical_results}

We now provide a set of example numerical results in Fig.~\ref{fig:numerical-results} using \emph{Scenario RF 1} and \emph{Scenario RF 2} energy harvesting conditions, with data taken from~\cite[Fig. 1]{ransford:mementos:sigarch:2011} and summarized in Table~\ref{tab:data}---we note that we have converted $\si{\milli\second}$ to our base units of clock ticks therein).

\textbf{Impact of Harvested Energy.} We present the average \ac{paoi} of our considered mixed-memory \ac{ipd} system (expression~(\ref{eq:exp_delta_p_x})) in Fig.~\ref{fig:Aoi_peak_and_average_result} using the parameters of Table~\ref{tab:data}. From Fig.~\ref{fig:Aoi_peak_and_average_result} we see that the average \ac{paoi} varies under different energy conditions and that the decrease in $\EX[f]$ between \emph{RF 1} and \emph{RF 2} decreases the value of $\EX[h]$ that minimises average \ac{paoi}. We also see that under-checkpointing has a far more significant impact of data freshness than over-checkpointing.

\textbf{Impact of Memory Structure.} We also consider the relationship between memory architecture and data freshness. Fig.~\ref{fig:figure4} presents plotted expressions~(\ref{eq:exp_delta_p_x}),~(\ref{eq:delta_nvm}), and~(\ref{eq:delta_vm}) as a function of the expected number of failures $\EX[f]$ (using Table~\ref{tab:data} RF1 parameters and $\EX[h]=10$ for MM). From Fig.~\ref{fig:figure4} it is evident that, whilst not universally true, for an expected checkpointing overhead and above a low number of failures $\EX[\Delta^{\text{Peak}}]_{\text{VM}}>\EX[\Delta^{\text{Peak}}]_{\text{MM}}>\EX[\Delta^{\text{Peak}}]_{\text{NVM}}$. This shows that whilst entirely \ac{nvm} architecture will always produce the best possible average \ac{paoi}, mixed-memory structures using checkpointing can provide significant improvements in data freshness compared with entirely volatile \acp{ipd}.  

\textbf{Impact of Checkpointing Strategy.} Finally we show the impact of checkpoining strategy by plotting expressions~(\ref{eq:exp_delta_p_x}) and~(\ref{eq:split}). Results are presented in Fig.~\ref{fig:Split}. We see that the system using two inter-checkpointing times ($\EX[K_{i,\alpha}]=5$ and $\EX[K_{i,\beta}]=20$) is the most efficient for a range $30\lessapprox\EX[f]\lessapprox50$ and also provides reasonable performance for all $\EX[f]$. Whilst \ac{sfc} cannot exceed the theoretical optimum for a single frequency (expression~(\ref{eq:exp_h_min_simple})) it can provide additional resilience by reducing the risk of a very high average \ac{paoi} due to an inappropriately chosen inter-checkpoint interval in an environment with unknown or variable failure rate. 
 
\begin{table}
\begin{center}
\begin{threeparttable}
\renewcommand{\arraystretch}{1}
\caption{System Parameter Values used for Numerical Results}
\centering
\begin{tabular}{ c c c c c c c }
\toprule
 & $\EX[P]^{\diamond}$ & $\EX[R]^{\ast}$ & $\EX[f]^{\ast}$ & $\EX[I]^{\lhd}$ & $D^{\wr}$ & $V^{\dagger}$\\
\midrule
\emph{Scenario RF 1} & 500 & 50 & 15 & 200 & 5 & 10 \\
\emph{Scenario RF 2} & 500 & 75 & 6 & 200 & 5 & 10 \\
\bottomrule
\end{tabular}
\label{tab:data}
\vspace{1mm}
  \begin{minipage}{8.4cm}%
    \footnotesize $^{\diamond}$Set as a baseline for the system. This value varies significantly based on processing needs. $^{\ast}$Representative of the dynamic variation of off-time for the first two scenarios in~\cite[Fig.1]{ransford:mementos:sigarch:2011} where failure occurs approximately every $\SI{50}{\milli\second}$ and $\SI{100}{\milli\second}$, respectively. $^{\lhd}$Approximate boot time of TinyOS from~\cite[Section 2]{ransford:mementos:sigarch:2011}. $^{\wr}$Overhead can vary significantly in real-world system, $\SI{5}{\milli\second}$ is of the order of magnitude expected compared with on-time in~\cite[Fig.1]{ransford:mementos:sigarch:2011}. $^{\dagger}$Restoration overhead is typically around twice the checkpoint overhead due to additional management and fixed boot costs. 
  \end{minipage}%
\end{threeparttable}
\end{center}
\vspace{-7mm}
\end{table}

\vspace{-2mm}
\section{Conclusion}
\label{sec:Conclusion}

In this paper we have considered an \acf{ipd} with mixed-memory architecture that periodically checkpoints the system state from volatile memory to non-volatile memory---from which it can be restored should power failure occur. We have identified expressions for the average \acf{aoi} and average \acf{paoi} of the system, and found a relationship for the expected checkpointing rate that minimises the expected \ac{paoi}. We have also shown that a mixed-memory \ac{ipd} using \acf{tdc} can reduce the system \ac{paoi} compared with a single volatile memory \ac{ipd} for selected system parameters. Further, we have proposed an alternative \ac{tdc} scheme, \acl{sfc}, which can improve \ac{ipd} performance compared with inaccurately selected single-frequency checkpoint intervals. 

\IEEEtriggeratref{14}
\bibliographystyle{IEEEtran}
\bibliography{references}
\end{document}